\newtheorem{theorem}{Theorem}[section]
\newtheorem{corollary}[theorem]{Corollary}
\theoremstyle{definition}
\newtheorem{definition}[theorem]{Definition}
\def\E{{\mathbb E}}
\def\R{{\mathbb R}}
\def\PP{{\mathbb P}}
\def\FF{{\mathbb F}}
\def\A{{\mathcal A}}
\def\F{{\mathcal F}}
\def\val2{w}
\def\ZZ{{\mathcal Z}}
\def\newC{\lambda}
\def\newomega{\beta}
\title[Consumption and investment under relative performance criteria]{Many-player games of optimal consumption and investment under relative performance criteria}
\author{Daniel Lacker}
\address{Industrial Engineering \& Operations Research, Columbia University, New York, NY.}
\email{daniel.lacker@columbia.edu}
\author{Agathe Soret}
\address{Industrial Engineering \& Operations Research, Columbia University, New York, NY.}
\email{acs2298@columbia.edu}
\date{}
\begin{document}

\begin{abstract}
We study a portfolio optimization problem for competitive agents with CRRA utilities and a common finite time horizon. The utility of an agent depends not only on her absolute wealth and consumption but also on her relative wealth and consumption when compared to the averages among the other agents. We derive a closed form solution for the $n$-player game and the corresponding mean field game. This solution is unique in the class of equilibria with constant investment and continuous time-dependent consumption, both independent of the wealth of the agent. Compared to the classical Merton problem with one agent, the competitive model exhibits a wide range of highly nonlinear and non-monotone dependence on the agents' risk tolerance and competitiveness parameters. Counter-intuitively, competitive agents with high risk tolerance may behave like non-competitive agents with low risk tolerance.
\end{abstract}

\maketitle

\section{Introduction}

In this paper, we extend the CRRA model for the optimal investment problem recently developed by Lacker and Zariphopoulou \cite{lacker2017mean} to include consumption. 
Our model can be loosely described as follows, with full details given in Section \ref{finite population}. Each agent chooses consumption and investment policies, with access to a riskless bond and a lognormal stock. The stocks in which the different agents specialize can be correlated, and we cover the extreme case of perfect correlation, i.e., a single stock in which all agents trade. Each agent has a CRRA utility depending on both absolute and relative wealth at the (common) time horizon $T$ as well as absolute and relative consumption, the latter in a time-integrated sense. Agents have different levels of risk aversion and different preferences toward absolute versus relative performance. The relative performance criteria couple these $n$ optimization problems, and we find the (unique, in a sense to be clarified later) Nash equilibrium in terms of the various model parameters.
As is natural in light of the classical Merton problem \cite{merton1969lifetime}, in equilibrium each agent invests a constant fraction of wealth in the stock, and the consumption strategy is time-dependent but independent of the agent's wealth.

The equilibrium behavior fits with Samuelson's result \cite{samuelson1975lifetime}: the investment strategy is independent of the consumption strategy. That is, the investment strategy is exactly the same as in the model without consumption studied in \cite{lacker2017mean}.
The equilibrium consumption policy, as a function of the various model parameters, displays even more highly nonlinear and non-monotone behavior than the investment policy, and we study this in detail in Section \ref{se:discussion}. 
Notably, each agent's rate of consumption $c_t$ changes monotonically with time $t$ over the entire horizon $[0,T]$; however, whether an agent increases or decreases consumption over time depends in a complex manner on her own risk preferences as well as certain aggregates of the other agents' parameters.

Three key features of our model are relative consumption concerns, relative wealth concerns, and asset specialization. We defer to the introduction of \cite{lacker2017mean} for a thorough discussion of the latter two topics and further references, but we stress the particularly important and by now well-established point that mutual fund choice is highly influenced by relative performance \cite{sirri1998costly}. That is, out-performing other fund managers tends to attract greater future investment in one's own fund. The most closely related works to ours, after \cite{lacker2017mean}, are
\cite{anthropelos-geng-zariphopoulou,basak2015competition,
bielagk2017equilibrium,dosreis-zariphopoulou,espinosa2015optimal,frei2011financial}. These papers study continuous-time models of optimal investment under relative performance concerns in various settings, including different kinds of utilities, equilibrium pricing, and state constraints, but none incorporate consumption.

There are two natural arguments for studying relative consumption concerns.
On the one hand, interpreting agents as fund managers, we may think of consumption as capital accumulation, in the form of equipment, technology, or benefits for employees. A high relative consumption in this sense would naturally attract clientele or better employees, and more generally it should lead to similar benefits as a high relative wealth.
On the other hand, if we interpret the agents in our model as household investors, then relative consumption concerns fit naturally with models of \emph{keeping up with the Joneses}; this line of literature directly incorporates the social aspects of investment and consumption decisions \cite{abel1990asset,chan2002catching,demarzo2004diversification,gali1994keeping}.

Our paper contributes to the literatures on optimal consumption and investment as well as the application of mean field games. The dynamic problem of lifetime consumption and investment planning with one player was formalized and studied in the landmark papers of Merton \cite{merton1969lifetime,merton1975optimum} and Samuelson \cite{samuelson1975lifetime}. Later work incorporated more complex features into the models, such as general price processes, bankruptcy, etc.\ \cite{karatzas1987optimal,karatzas2003optimal}. Those that incorporated \emph{multiple agents} into the model, such as \cite{karatzas1997explicit,sethi1992infinite}, did so in an equilibrium context; each agent's behavior depends on the others only through the price, which is determined in equilibrium. Agents are price-takers in our model, and we do not attempt to incorporate price equilibrium, as this would severely strain tractability.

In another direction, our work provides a new explicitly solvable mean field game model. Mean field games, introduced in \cite{huang2006large,lasry-lions}, are rarely explicitly solvable outside of linear-quadratic examples. See \cite{bayraktar2018large,carmona2018dyson,gueant-lasry-lions,lacker2017mean,sun2016systemic} for some notable exceptions and the book \cite{carmona-delarue-book} for further background on the active area of mean field games.
From a mean field game perspective, our model is rather complex: It involves common noise, degenerate volatility coefficients, singular objective functions, and a mean field interaction through both the states and controls (i.e., an \emph{extended mean field game} \cite[Chapter I.4.6]{carmona-delarue-book}). Nevertheless, the precise structure of the problem lends itself to an explicit solution. Our argument follows along the lines of \cite{lacker2017mean}, treating the mean field term (geometric mean of wealth) as a state variable, which leads to a fixed point problem involving a single Hamilton-Jacobi-Bellman (HJB) equation as opposed to the $n$-dimensional HJB system often used for stochastic differential games. After showing that this equation admits a unique and separable classical solution, the fixed point is resolved via a system of non-linear ordinary differential equations. Despite many similarities with \cite{lacker2017mean}, the consumption renders the arguments substantially more involved.

The paper is organized as follows. In Section \ref{finite population}, we formulate and solve the $n$-agent model described above. Then, in Section $3$, we study the infinite population counterpart of this problem, arguing that the $n\to\infty$ limit results in a simpler form of the equilibrium. Finally, Section $4$ discusses and interprets the form of the equilibrium and its dependence on the model parameters.

\section{The n-agent game}\label{finite population}

In this section, we consider the $n$-player game, where each agent trades in a common investment horizon $[0,T]$. Agents may invest in their own specific stocks or in a common riskless bond which offers zero interest rate.
The price of stock $i$, in which only agent $i$ trades, is given by the dynamics
\begin{equation}\label{prices}
    \frac{dS^i_t}{S^i_t} = \mu_i dt + \nu_i dW^i_t + \sigma_i dB_t,
\end{equation}
where the Brownian motions $W^1, ..., W^n, B$ are independent and defined on a probability space $(\Omega, \F, \PP)$, which we endow with the natural filtration $(\F_t)_{t \in [0,T]}$ generated by these $n+1$ Brownian motions, and where the market parameters are constants $\mu_i > 0$, $\sigma_i \ge 0$, and $\nu_i \ge 0$, with $\sigma_i + \nu_i > 0$. The prices $S^i_t$ are assumed to be one-dimensional for simplicity, although we could easily extend the results to $k$-dimensional prices.

This setup covers the important special case of a \textit{single stock}, corresponding to the situation where all the stocks are identical. That is, $\mu_i = \mu$, $\nu_i = 0$, and $\sigma_i = \sigma$, for all $i = 1, ..., n$ and for some $\mu, \sigma > 0$ independent of $i$, and so $S^i \equiv S^j$ for each $i,j$ (assuming the initial values agree). In the single stock case, the agents face the same market opportunities rather than specializing in different assets, though their risk preferences still differ.

Each agent $i$ chooses a self-financing strategy, $(\pi^i_t)_{t \in [0,T]}$, denoting the proportion of wealth invested in the stock $i$, and a consumption policy, $(c_t^i)_{t \in [0,T]}$. The wealth process of agent $i$ is then given by
\begin{equation}\label{processX}
    dX^i_t = \pi^i_t X_t^i \left(\mu_i dt + \nu_i dW^i_t + \sigma_i dB_t\right) - c^i_t X^i_t dt, \quad\quad X_0^i = x_0^i.
\end{equation}
Note that $c^i_tX^i_t$ represents the instantaneous rate of consumption of agent $i$, so that $c^i_t$ is the rate per unit wealth. We say that a portfolio strategy is admissible if it belongs to the set $\A$ of $\FF$-progressively measurable $\R \times \R_+$-valued process $(\pi_t, c_t)_{t \in [0,T]}$ satisfying $\E  \int_0^T (\pi_t^2 + c_t^2)dt < \infty$. Throughout the paper, $\R_+ :=(0,\infty)$ denotes the \emph{strictly} positive reals, and we do not allow a consumption rate of zero. This is reasonable and less restrictive than it may at first appear because the form of our utility functions, introduced in the next paragraph, will ensure that agents' marginal utilities approach $+\infty$ as consumption approaches zero. Note also that for any admissible portfolio strategy we have $X^i_t > 0$ for all $t \in [0,T]$. Indeed, we parametrized our consumption process as we did in \eqref{processX} in part to avoid the possibility of bankruptcy and in part to avoid imposing any state constraints.

The utility function of each agent belongs to the family of power (CRRA) utilities,
\begin{equation*}
U(x; \delta) = \begin{cases}
        \frac{1}{1 - 1 / \delta}x^{1 - 1 / \delta} & \text{ if } \delta \ne 1 \\
        \log x & \text{ if } \delta = 1,
    \end{cases}
\end{equation*}
defined for $x,\delta > 0$. Agent $i$ seeks to maximize the expected utility
\begin{align}
J_i((\pi^i,c^i)_{i=1}^n) = \E\left[\int_0^T   U\left(c^i_tX^i_t  (\overline{cX}_t)^{- \theta_i}; \delta_i \right)dt + \epsilon_i U\left(X^i_T \overline{X}_T^{- \theta_i}; \delta_i \right)\right], \label{def:utility}
\end{align}
defined for any vector of admissible strategies $(\pi^i,c^i)_{i=1}^n$ where $(\pi^i,c^i) \in \A$ for each $i=1,\ldots,n$.
Here $\overline{X}_T = \left( \prod_{k = 1}^{n} X_T^k \right)^{1/n}$ and $\overline{cX}_t = \left( \prod_{k = 1}^{n} (c_tX_t)^k \right)^{1/n}$ are the population (geometric) average wealth and consumption rate, respectively. 
The parameters $\delta_i > 0$  and $\theta_i \in [0,1]$ represent respectively the $i^{\mathrm{th}}$ agent's risk tolerance and competition weight. We apply the same utility function to both wealth and consumption for tractability reasons, but we scale the utility of wealth with the parameter $\epsilon_i > 0$ to capture the relative importance that the agent assigns to terminal wealth compared to consumption.

Our choice to work with geometric averages instead of arithmetic averages is also motivated by tractability: Just as the (arithmetic) average of independent Brownian motions is again a Brownian motion, the geometric average of independent geometric Brownian motions is again a geometric Brownian motion. It is worth emphasizing another form of the utility function, revealed by writing the terms inside the utility function as
\begin{align*}
c^i_tX^i_t  (\overline{cX}_t)^{- \theta_i} = (c^i_tX^i_t)^{1-\theta_i}\left(\frac{c^i_tX_t^i}{\overline{cX}_t}\right)^{\theta_i}, \quad\quad X^i_T \overline{X}_T^{- \theta_i} = (X^i_T)^{1-\theta_i}\left(\frac{X_T^i}{\overline{X}_T}\right)^{\theta_i}.
\end{align*}
The ratios $c^i_tX_t^i/\overline{cX}_t$ and $X^i_T/\overline{X}_T$ measure the relative consumption rate and relative terminal wealth, respectively. In particular, the utility function in \eqref{def:utility} is applied to the log-convex combination between absolute and relative consumption rate and terminal wealth, with $\theta_i$ controlling the tradeoff between absolute and relative performance. For $\theta_i$ close to $1$, agent $i$ is more concerned with relative performance than absolute performance, and for $\theta_i=0$ agent $i$ is not at all competitive and ignores the rest of the population.

The goal is to find a Nash equilibrium, an investment strategy $(\vec{\pi}^*_t, \vec{c}^*_t)_{t \in [0,T]}$ such that $\pi_t^{i,*}$ and  $c_t^{i,*}$ are respectively the optimal stock and consumption allocation exercised by agent $i$ in response to the strategy of all the other agents. With Merton's problem and the recent findings of \cite{lacker2017mean} in mind, we might expect to find an equilibrium where the investment strategies $\vec{\pi}^*_t$ are constant and the consumption strategies $c_t^{i,*}$ are only time-dependent.

\begin{definition}
We say that a vector $(\pi^{i,*}, c^{i,*})_{i=1}^n$ of admissible strategies (i.e., $(\pi^{i,^*}, c^{i,*}) \in \A$ for each $i$) is an \emph{equilibrium} if for each $i = 1, ..., n$ and each $(\pi, c) \in \A$ we have 
\begin{equation*}
 J_i((\pi^{i,*},c^{i,*})_{i=1}^n) \ge J_i\Big( \ldots, (\pi^{i-1,*},c^{i-1,*}), (\pi, c), (\pi^{i+1,*},c^{i+1,*}), \ldots\Big).
\end{equation*}
An equilibrium $(\pi^{i,*}, c^{i,*})_{i=1}^n$ is called a \emph{strong equilibrium} if, for each $i$, the process $c^{i,*}$ is deterministic and continuous, and the process $\pi^{i,*}$ is deterministic and constant.\footnote{This definition of equilibrium is more specifically of \emph{open-loop} type, but a strong equilibrium, being nonrandom, can be shown to also provide an equilibrium over closed-loop or Markovian controls.}
\end{definition}

The main result is the following, which gives the explicit form of an equilibrium:

\begin{theorem}\label{thmN}
Let $n \ge 2$.
Assume that for all $i = 1, ..., n$, we have $x_0^i > 0$, $\delta_i > 0$, $\theta_i \in [0,T]$, $\epsilon_i > 0$, $\mu_i > 0$, $\sigma_i \ge 0$, $\nu_i \ge 0$, and $\sigma_i + \nu_i > 0$. Then there is a unique strong equilibrium $(\pi^{i,*}, c^{i,*})_{i=1}^n$, and it takes the following form:
\begin{align}
 \pi^{i,*} &= \frac{\delta_i \mu_i }{\sigma_i^2 + \nu^2_i(1 + (\delta_i -1)\theta_i/n)} -  \frac{\theta_i (\delta_i -1)\sigma_i}{\sigma_i^2 + \nu_i^2(1 + (\delta_i -1)\theta_i/n)} \frac{\phi}{1 + \psi} \label{optcontroleq} \\
c^{i,*}_t &= \begin{cases}
            \left(\frac{1}{\newomega_i} + \left(\frac{1}{\newC_i} - \frac{1}{\newomega_i} \right)e^{- \newomega_i (T - t)} \right)^{-1} & \mathrm{if } \  \beta_i  \neq 0 \\
            (T - t + \newC_i^{-1})^{-1} & \mathrm{if } \ \beta_i = 0. 
        \end{cases} \label{optconsumption}
\end{align}
The constants $(\phi,\psi)$ and $(\newomega_i,\newC_i)_{i=1}^n$ are given by
\begin{align}
\phi &= \frac{1}{n} \sum_{k = 1}^n \delta_k \frac{\mu_k \sigma_k}{\sigma_k^2 + \nu_k^2(1 + (\delta_k-1)\theta_k/n)}, \nonumber \\
\psi &= \frac{1}{n} \sum_{k = 1}^n  \theta_k (\delta_k - 1)\frac{\sigma_k^2}{\sigma_k^2 + \nu_k^2(1 + (\delta_k-1)\theta_k/n)}, \nonumber \\
\newomega_i &= \theta_i (\delta_i - 1)\frac{\frac{1}{n} \sum_{k = 1}^n \delta_k \rho_k}{1 + \frac{1}{n} \sum_{k = 1}^n \theta_k(\delta_k -1)  }- \delta_i \rho_i, \nonumber \\
\newC_i &= \epsilon_i^{-\delta_i} \left(\left(\prod_{k = 1}^n \epsilon_k^{\delta_k} \right)^{1/n}\right)^{\theta_i(\delta_i - 1) / (1 + \frac{1}{n} \sum_{k = 1}^n \theta_k(\delta_k-1))}, \label{def:C_i}
\end{align}
where we define also $(\rho_i)_{i=1}^n$ by
\begin{align*}
\rho_i = (1 - 1/\delta_i) \Bigg\{ & \frac{(1 - \theta_i/n)(\mu_i - \sigma_i\theta_i(1 - 1/\delta_i)\frac{1}{n}\sum_{k \ne i} \sigma_k \pi^{k,*})^2}{2 (\sigma_i^2 + \nu_i^2)(1-(1 -\theta_i/n)(1 - 1/\delta_i) )}  \\
  &+ \frac{1}{2}\Big(\Big(\frac{1}{n} \sum_{k\ne i} \sigma_k \pi^{k,*} \Big)^2 + \frac{1}{n^2} \sum_{k \ne i} (\nu_k \pi^{k,*})^2\Big) \theta_i^2(1 - 1/\delta_i) \\
  &-  \theta_i \frac{1}{n} \sum_{k \ne i} \mu_k \pi^{k,*} + \frac{\theta_i}{2n} \sum_{k \ne i} ( \sigma^2_k + \nu_k^2)(\pi^{k,*})^2 \Bigg\},
\end{align*}
Moreover, we have the identity
\begin{align}
\frac{1}{n} \sum_{k=1}^{n} \sigma_k \pi^{k,*} = \frac{\phi}{1 + \psi}. \label{def:vol-identity}
\end{align}
\end{theorem}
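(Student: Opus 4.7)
The plan is to combine a verification argument with a fixed-point construction. I would show that for any strong equilibrium $(\pi^{j,*},c^{j,*})_{j=1}^n$, each agent $i$'s strategy must be the unique best response to the others', then solve the coupled fixed-point system explicitly. Since for $j\neq i$ the process $X^{j,*}_t$ is an explicit geometric Brownian motion driven by $(W^j,B)$ with deterministic drift (because $\pi^{j,*}$ is constant and $c^{j,*}_t$ is deterministic continuous), agent $i$'s best-response problem is a stochastic control problem on the $n$-dimensional Markov state $(X^i_t,(X^{k,*}_t)_{k\neq i})$, accessible by standard HJB methods.

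Motivated by the multiplicative structure of the CRRA utility and its terminal data, I would propose the separable ansatz
\[
V^i(t,x^i,(x^k)_{k\neq i}) = \frac{f_i(t)}{1-1/\delta_i}(x^i)^{\alpha_i}\prod_{k\neq i}(x^k)^{-\theta_i(1-1/\delta_i)/n}, \quad \alpha_i := (1-\theta_i/n)(1-1/\delta_i),
\]
with $f_i(T)=\epsilon_i$. Substituting into the HJB and dividing by $V^i$ reduces the PDE to a scalar ODE for $f_i$, with a pointwise inner supremum over $(\pi,c)$ of a strictly concave integrand. The FOC in $\pi$ is linear and yields a constant $\pi^{i,*}$ depending linearly on $\sum_{k\neq i}\sigma_k\pi^{k,*}$; multiplying by $\sigma_i/n$ and averaging over $i$ produces the identity \eqref{def:vol-identity}, after which the FOC rearranges to \eqref{optcontroleq}. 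The FOC in $c$ expresses $c^{i,*}_t$ as an explicit power of $f_i(t)\prod_{k\neq i}(c^{k,*}_t)^{-\theta_i(1-1/\delta_i)/n}$; in particular, the best response is automatically of strong type (constant $\pi$, continuous deterministic $c$), confirming self-consistency of the ansatz.

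Plugging both optima back into the HJB and computing $f_i'/f_i$ via the logarithmic consumption FOC should produce, after cancellations, the Bernoulli-type ODE $(c^{i,*}_t)' = (c^{i,*}_t)^2 - \beta_i\, c^{i,*}_t$, whose explicit solution is \eqref{optconsumption}. Here $\beta_i$ collects all time-independent terms in the $\log f_i$ ODE, and matching across $i$ gives a linear system whose inhomogeneous part is precisely $\rho_i$; averaging over $i$ with weights $\theta_i(\delta_i-1)$ and $\delta_i$ disentangles the system and produces the stated formula for $\beta_i$. The terminal value $\lambda_i = c^{i,*}_T$ arises analogously: specializing the consumption FOC at $t=T$ with $f_i(T)=\epsilon_i$ and solving the induced linear log-system yields \eqref{def:C_i}.

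The main technical hurdle is the bookkeeping that identifies $\rho_i$: it combines the Merton-type contribution from agent $i$'s optimized $\pi^{i,*}$ (which I would simplify using the $\pi$-FOC to collapse the quadratic term into the first line of $\rho_i$) with all cross-covariance contributions from the common-noise exposure $\sigma_k$ of the other agents' wealth, as well as the diagonal drift and variance contributions from the $\partial_{x^k}V^i$ and $\partial_{x^kx^k}V^i$ terms. The remaining arguments are routine: admissibility is automatic since $\pi^{i,*}$ is constant and $c^{i,*}_t$ is continuous and positive on $[0,T]$; a Merton-type verification theorem shows the separable ansatz realizes the value function and upper-bounds every admissible payoff; and uniqueness in the strong class follows from strict concavity of the pointwise $(\pi,c)$-optimization together with unique solvability of the linear fixed-point systems for $\pi^{i,*}$, $\beta_i$, and $\lambda_i$, which holds because $1+\tfrac{1}{n}\sum_k \theta_k(\delta_k-1)>0$ and the denominators in \eqref{optcontroleq} are strictly positive under the standing hypotheses $\sigma_i+\nu_i>0$, $\theta_i\in[0,1]$, and $\delta_i>0$.
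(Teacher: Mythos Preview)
Your approach is essentially the same as the paper's: separable CRRA ansatz for the value function, reduction of the HJB to an ODE for $f_i$, best-response FOCs yielding constant $\pi^{i,*}$ and deterministic $c^{i,*}_t$, and resolution of the fixed point by averaging over agents. The paper streamlines your $n$-dimensional state $(X^i,(X^k)_{k\neq i})$ by passing at the outset to the two-dimensional state $(X^i,Y)$ with $Y_t=(\prod_{k\neq i}X^k_t)^{1/n}$, but your product ansatz collapses to exactly this reduction, so the computations are equivalent. One small gap: your ansatz $V^i=\tfrac{f_i(t)}{1-1/\delta_i}(x^i)^{\alpha_i}\prod_{k\neq i}(x^k)^{-\theta_i(1-1/\delta_i)/n}$ is undefined when $\delta_i=1$; the paper treats the log-investor case separately (the objective decouples from the other agents and reduces to the classical Merton problem), then checks that the resulting formulas coincide with the $\delta_i\neq 1$ expressions in the limit, so that the fixed-point argument for $(c_i)_{i=1}^n$ can proceed uniformly.
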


Note that $\delta_i=1$ implies $\beta_i=0$, which means that log-investors always use the second form of $c^{i,*}_t$ given in \eqref{optconsumption}.
The form of the equilibrium does not seem to simplify much further, except in the single stock case:

\begin{corollary}(Single stock)
Assume that for all $i = 1, ..., n$ we have $\mu_i = \mu > 0$, $\sigma_i = \sigma > 0$, and $\nu_i =0$. Then there is a unique strong equilibrium $(\pi^{i,*}, c^{i,*})_{i=1}^n$, and it takes the following form:
\begin{align}
\pi^{i,*} &= \frac{\mu}{\sigma^2}\left(\delta_i - \frac{\theta_i}{\theta_{\mathrm{crit}}}(\delta_i-1)\right) \label{optcontroleq-singlestock} \\
c^{i,*}_t &= \begin{cases}
            \left(\frac{1}{\newomega_i} + \left(\frac{1}{\newC_i} - \frac{1}{\newomega_i} \right)e^{- \newomega_i (T - t)} \right)^{-1} & \mathrm{if } \ \beta_i \ne 0, \\
            (T - t + \newC_i^{-1})^{-1} & \mathrm{if } \ \beta_i = 0. 
        \end{cases} \label{optconsumption-singlestock}
\end{align}
For each $i$, the constant $\newC_i$ is given by \eqref{def:C_i}, and $\newomega_i$ and $\theta_{\mathrm{crit}}$ are given by
\begin{align*}
\newomega_i &= \frac{\mu^2}{2\sigma^2}(1 - \delta_i)\left(1 - \frac{\theta_i}{\theta_{\mathrm{crit}}}\right)\left(\delta_i - \frac{\theta_i}{\theta_{\mathrm{crit}}}(\delta_i-1)\right), \\
\theta_{\mathrm{crit}} &= \frac{1 + \frac{1}{n}\sum_{k = 1}^{n} \theta_k(\delta_k -1)}{\frac{1}{n}\sum_{k = 1}^{n} \delta_k }.
\end{align*}
\end{corollary}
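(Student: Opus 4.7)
The corollary is a direct specialization of Theorem~\ref{thmN} to $\mu_i=\mu$, $\sigma_i=\sigma$, $\nu_i=0$. These parameters satisfy the hypotheses of that theorem (since $\sigma_i+\nu_i=\sigma>0$), so the existence and uniqueness of the strong equilibrium follow directly, and \eqref{optconsumption-singlestock} is just \eqref{optconsumption} after the constants $\newomega_i$ and $\newC_i$ are rewritten in simpler form. The constant $\newC_i$ in \eqref{def:C_i} depends only on the utility parameters and is unchanged, so the task reduces to simplifying $\pi^{i,*}$ and $\newomega_i$.

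First I would evaluate the aggregates: direct substitution gives $\phi=(\mu/\sigma)\cdot\frac{1}{n}\sum_k\delta_k$ and $\psi=\frac{1}{n}\sum_k\theta_k(\delta_k-1)$, so the definition of $\theta_{\mathrm{crit}}$ yields $\phi/(1+\psi)=\mu/(\sigma\theta_{\mathrm{crit}})$. Plugging this into \eqref{optcontroleq} with $\nu_i=0$ gives \eqref{optcontroleq-singlestock}. Writing $p_i:=\delta_i-\theta_i(\delta_i-1)/\theta_{\mathrm{crit}}$, one has $\pi^{i,*}=(\mu/\sigma^2)p_i$ and $\frac{1}{n}\sum_k p_k=1/\theta_{\mathrm{crit}}$ by the identity \eqref{def:vol-identity}. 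The algebraic identity $1-p_i=(1-\delta_i)(1-\theta_i/\theta_{\mathrm{crit}})$ shows that the claim on $\newomega_i$ is equivalent to the compact expression $\newomega_i=(\mu^2/(2\sigma^2))p_i(1-p_i)$, which is what I would actually verify.

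For $\newomega_i=\theta_i(\delta_i-1)\bar\rho/\theta_{\mathrm{crit}}-\delta_i\rho_i$ with $\bar\rho:=(\sum_k\delta_k\rho_k)/(\sum_k\delta_k)$, I would substitute the specialized parameters into the definition of $\rho_i$ and use $\frac{1}{n}\sum_{k\ne i}\sigma_k\pi^{k,*}=\mu/(\sigma\theta_{\mathrm{crit}})-\sigma\pi^{i,*}/n$. This reduces $\rho_i$ to a rational expression in $(p_i,\theta_i,\delta_i)$ plus an additive term proportional to $M_{-i}:=\frac{1}{n}\sum_{k\ne i}p_k^2$. Expanding the squared numerator in the first summand of $\rho_i$ and using $\alpha_i\gamma_i/(1-\alpha_i\gamma_i)+1=1/(1-\alpha_i\gamma_i)$, with $\alpha_i=1-\theta_i/n$ and $\gamma_i=1-1/\delta_i$, rewrites $\rho_i$ as a quadratic in $p_i$ plus $(\mu^2/(2\sigma^2))\theta_i\gamma_i M_{-i}$. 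Forming $\bar\rho$ and the combination $\theta_i(\delta_i-1)\bar\rho/\theta_{\mathrm{crit}}-\delta_i\rho_i$, the denominator identity $n\bar\delta\theta_{\mathrm{crit}}=n+\sum_k\theta_k(\delta_k-1)$, where $\bar\delta:=\frac{1}{n}\sum_k\delta_k$, makes the $M_{-i}$-type contributions cancel exactly, and the $p_i^2$ pieces cancel by a similar matching; the remaining linear and constant terms collapse to $(\mu^2/(2\sigma^2))(p_i-p_i^2)$.

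The main obstacle is precisely this bookkeeping: the formula for $\rho_i$ carries mixed dependences on $p_i$ and on the sum-of-squares $M_{-i}$, and the cancellations between $\bar\rho$ and $\delta_i\rho_i$, while exact, become visible only after one groups terms by degree in $p_i$ and by which global moment (such as $\bar\delta$, $\frac{1}{n}\sum_k\theta_k(\delta_k-1)$, or $M:=\frac{1}{n}\sum_k p_k^2$) appears as a coefficient. Once the expansion is organized in this way, the verification is mechanical and reproduces the claimed formula for $\newomega_i$.
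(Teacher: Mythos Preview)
Your approach is correct and is exactly what the paper intends: the corollary is stated without proof immediately after Theorem~\ref{thmN}, so its derivation is understood to be the direct specialization you carry out. Your computations of $\phi$, $\psi$, and $\phi/(1+\psi)=\mu/(\sigma\theta_{\mathrm{crit}})$ are correct, the reduction of $\pi^{i,*}$ is immediate, and your strategy for $\newomega_i$---rewriting the target as $(\mu^2/2\sigma^2)p_i(1-p_i)$ via the identity $1-p_i=(1-\delta_i)(1-\theta_i/\theta_{\mathrm{crit}})$ and then tracking the cancellation of the $M_{-i}$ and quadratic terms between $\theta_i(\delta_i-1)\bar\rho/\theta_{\mathrm{crit}}$ and $\delta_i\rho_i$---is the natural route through the algebra that the paper leaves to the reader.
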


In Section \ref{se:mfg} we simplify this further by sending $n\to\infty$. Then, in Section \ref{se:discussion}, we analyze in detail how the equilibrium behavior depends on the various parameters.

\begin{proof}[Proof of Theorem \ref{thmN}]
First, we fix an agent $i \in \{1,\ldots,n\}$ and suppose that all other players follow given strategies. That is, for $k \neq i$, let $\pi_k \in \R$ and $c_k : [0,T] \to \R_+$ denote fixed admissible strategies for the other agents, in which the investment policy $\pi_k$ is constant and the consumption policy $c_k$ is a deterministic continuous function.
We will solve the optimization problem for agent $i$, determining the agent's best response to the competitors' strategies. Then, we will resolved the resulting fixed point problem.

Define $Y_t := (\prod_{k \ne i} X_t^k)^{1/n}$, where $X_t^k$ solves (\ref{processX}) subject to the strategies $(\pi_k,c_k)$, with $X_0^k = x^k_0$. We use the following abbreviations:
\[
\begin{array}{ccc}
\Sigma_k = \sigma_k^2 + \nu_k^2 & \quad \widehat{\mu \pi}_{-i} = \frac{1}{n} \sum_{k \ne i} \mu_k \pi_k, & \quad  \widehat{\sigma \pi}_{-i} = \frac{1}{n} \sum_{k \ne i} \sigma_k \pi_k, \\
\widehat{\Sigma \pi^2}_{-i} = \frac{1}{n} \sum_{k \ne i} \Sigma_k \pi_k^2, & \quad \widehat{(\nu \pi)^2}_{-i} = \frac{1}{n} \sum_{k \ne i} \nu_k^2 \pi_k^2, & \quad \widehat{c}_{-i}(t) = \frac{1}{n} \sum_{k \ne i} c_k(t). \\
\end{array}
\]
A straightforward calculation with It\^o's formula (cf.\ the proof of Theorem 14 in \cite{lacker2017mean}) shows that the process $Y_t$ satisfies
\begin{equation}\label{processY}
\frac{dY_t}{Y_t} = (\eta_i - \widehat{c}_{-i}(t)) dt + \frac{1}{n} \sum_{k \ne i} \nu_k \pi_k dW^k_t + \widehat{\sigma \pi}_{-i} dB_t,
\end{equation}
where we define also
\[
\eta_i = \widehat{\mu \pi}_{-i} - \frac{1}{2}\left( \widehat{\Sigma \pi^2}_{-i} - \widehat{\sigma \pi}_{-i}^2 - \frac{1}{n} \widehat{(\nu \pi)^2}_{-i}\right).
\]
The $i^{\mathrm{th}}$ agent then solves the optimization problem 
\begin{equation}\label{opt}
\sup_{(\pi^i, c^i) \in \A} \E\left[\int_0^T  U\left((c^i_t X^i_t)^{1 - \theta_i /n} (\bar{c}_{-i}(t)Y_t)^{- \theta_i}; \delta_i \right)dt + \epsilon_i U\left(\left(X^i_T\right)^{1 - \theta_i / n} Y_T^{- \theta_i}; \delta_i \right) \right],
\end{equation}
where $\bar{c}_{-i}(t) = \left( \prod_{k \ne i} c_k(t) \right)^{1/n}$ and\footnote{We use a bar $\overline{c}$ to denote a geometric average and a hat $\widehat{c}$ to denote an arithmetic average.}
\[
dX^i_t = \pi^i_t X_t^i (\mu_i dt + \nu_i dW^i_t + \sigma_i dB_t) - c^i_t X^i_t dt, \hspace{0.5 cm} X_0^i = x_0^i,
\]
with $(Y_t)_{t \in [0,T]}$ solving (\ref{processY}). Treating $(X^i,Y)$ as the state process, we solve this stochastic optimal control problem by noting that the value (\ref{opt}) should equal $v(X_0^i, Y_0, 0)$, where $v(x,y,t)$ solves the HJB equation
\begin{equation}
\begin{split}
0=v_t &+ \sup_{\pi \in \R} \left[ \pi(\mu_i x v_x + \sigma_i \widehat{\sigma \pi}_{-i} xy v_{xy}) + \frac{1}{2}  \pi^2\Sigma_i x^2 v_{xx}\right] \\
&+ \sup_{c \in \R_+} \left[- c x  v_x + U\left((cx)^{(1 - \theta_i/n)}(\bar{c}_{-i}(t)y)^{- \theta_i}; \delta_i \right) \right] \\
&+ (\eta_i - \widehat{c}_{-i}(t))y v_y + \frac{1}{2}\left(\frac{1}{n} \widehat{(\nu \pi)^2}_{-i} + \widehat{\sigma \pi}_{-i}^2 \right) y^2 v_{yy},
\end{split} \label{pf:HJB}
\end{equation}
for $(x,y,t) \in \R_+ \times \R_+ \times [0,T)$, with terminal condition
\begin{align}
v(x,y,T) = \epsilon_i  U(x^{1 - \theta_i/n}y^{- \theta_i}; \delta_i). \label{pf:HJB-boundary}
\end{align}
Notice that the two suprema in \eqref{pf:HJB} are finite if $v_{xx} < 0$ and $v_x > 0$, so we assume for the moment that this is the case, and we will ultimately check that our solution does satisfy these constraints.
If follows from a standard verification theorem that there can be at most one classical solution of this PDE.
Since the utility takes a different form depending on whether or not $\delta_i = 1$, we treat these two cases separately in the next part of the proof. \\ {\ }

\noindent\textbf{The case} $\bm{\delta_i \neq 1}$:
The utility function takes the form 
    \begin{equation*}
           U\left((cx)^{\left(1 - \theta_i/n\right)}(\bar{c}_{-i}(t)y)^{-\theta_i}; \delta_i \right) = \left(1 - \frac{1}{\delta_i} \right)^{-1} (cx)^{(1 - \theta_i/n)(1 - 1 / \delta_i)}(\bar{c}_{-i}(t)y)^{- \theta_i(1- 1 / \delta_i)}.
    \end{equation*}
Applying the first order conditions, the suprema in \eqref{pf:HJB} are attained by
\begin{equation}\label{optcontrolNot}
    \pi^{i,*}(x,y,t) = - \frac{\mu_i x v_x(x,y,t) + \sigma_i \widehat{\sigma \pi}_{-i} xy v_{xy}(x,y,t)}{\Sigma_i x^2 v_{xx}(x,y,t)},
\end{equation}
and
\begin{equation}
c^{i,*}(x,y,t) = \frac{1}{x}\left(\frac{(1 - \frac{\theta_i}{n})(\bar{c}_{-i}(t)y)^{- \theta_i(1- 1 / \delta_i)}}{ v_x(x,y,t)} \right)^{\frac{1}{1-(1 - \theta_i/n)(1- 1/\delta_i)}}.
    \tag{\theequation'}
\end{equation}
Let us introduce the following constants:
\begin{align}\label{gammai}
\gamma_i &= \frac{1}{1-(1 - \theta_i/n)(1- 1/\delta_i)}, \\
\Gamma_i &= \left(1 - \frac{\theta_i}{n}\right)^{ \gamma_i}\left(\frac{1}{ \gamma_i - 1} \right). \nonumber 
\end{align}
Using these constants and the expressions \eqref{optcontrolNot} and (\ref{optcontrolNot}'), the HJB equation \eqref{pf:HJB} becomes 
\begin{equation}\label{HJBnot}
\begin{split}
0 = v_t &- \frac{1}{2}\frac{(\mu_i x v_x + \sigma_i \widehat{\sigma \pi}_{-i} xy v_{xy})^2}{\Sigma_i x^2 v_{xx}} + \frac{1}{2}\big(\widehat{\sigma \pi}_{-i}^2 + \frac{1}{n}\widehat{(\nu \pi)^2}_{-i}\big)y^2 v_{yy} + (\eta_i - \widehat{c}_{-i}(t)) y v_y \\
    &+ (v_x)^{1 - \gamma_i}(\bar{c}_{-i}(t)y)^{- \gamma_i \theta_i(1- 1 / \delta_i)}\Gamma_i.
    \end{split}
\end{equation}
We now make the ansatz 
\begin{align}
v(x,y,t) = \epsilon_i \left( 1 - \frac{1}{\delta_i} \right)^{-1}  x^{(1 - \theta_i/n)(1 - 1 / \delta_i)} y^{-\theta_i(1 - 1 / \delta_i)}f_i(t), \label{ansatz-Not}
\end{align}
for a differentiable function $f_i : [0,T] \to \R$ to be determined.
Note that the boundary condition \eqref{pf:HJB-boundary} requires $f_i(T)=1$. 
Plugging this into the HJB equation \eqref{HJBnot}, we find that $v(x,y,t)/f_i(t)$ factors out of each term, and we get
\begin{equation} \label{eqf}
0 = f_i'(t) + \left(\rho_i +  \theta_i \left(1 - \frac{1}{\delta_i}\right)\widehat{c}_{-i}(t)\right) f_i(t) + \frac{\epsilon_i^{-\gamma_i}}{\gamma_i} \bar{c}_{-i}(t)^{- \gamma_i \theta_i(1- 1 / \delta_i)} f_i(t)^{1 - \gamma_i},
\end{equation}
where we define
\begin{equation}\label{rhoi}
\begin{split}
      \rho_i = \left(1 - \frac{1}{\delta_i}\right) & \left( \frac{\gamma_i (1 - \theta_i/n)(\mu_i - \sigma_i \widehat{\sigma \pi}_{-i}\theta_i (1 - 1 / \delta_i))^2}{2\Sigma_i} \right. \\
      &\quad + \left. \frac{1}{2}(\widehat{\sigma \pi}_{-i}^2 + \frac{1}{n}\widehat{(\nu \pi)^2}_{-i})\theta_i^2(1 - 1 / \delta_i) - \theta_i \widehat{\mu \pi}_{-i} + \frac{\theta_i}{2}\widehat{\Sigma \pi^2}_{-i}  \right).
\end{split}
\end{equation}
Indeed, the last term in \eqref{eqf} comes from the identity
\begin{equation*}
\begin{split}
v_x(x,y,t)^{1 - \gamma_i}y^{- \gamma_i \theta_i(1- 1 / \delta_i)} = \epsilon_i^{-\gamma_i} \left( 1 - \frac{\theta_i}{n} \right)^{1 - \gamma_i} \left( 1 - \frac{1}{\delta_i}\right) f_i(t)^{- \gamma_i} v(x,y,t).
\end{split}
\end{equation*}
To solve \eqref{eqf}, let us for the moment abbreviate
\begin{align}
a_i(t) &:= \rho_i + \theta_i(1 - 1/\delta_i)\widehat{c}_{-i}(t), \quad\qquad
b_i(t) := \frac{\epsilon_i^{-\gamma_i}}{\gamma_i} \bar{c}_{-i}(t)^{- \gamma_i \theta_i(1- 1 / \delta_i)}. \label{def:a-b-for-f}
\end{align}
Then (\ref{eqf}) rewrites as
\begin{equation*}
f_i' + a_i f_i + b_i f_i^{1 - \gamma_i} = 0.
\end{equation*}
This is an example of a \emph{Bernoulli equation}, and a well known change of variables leads to the solution.
Indeed, and divide by $f_i^{1-\gamma_i}$ (after noting that $\gamma_i > 0$) and use the substitution $u_i(t) = f_i^{\gamma_i}(t)$, so that (\ref{eqf}) becomes the linear differential equation
\begin{equation*}
\frac{1}{\gamma_i} u_i' + a_i u_i + b_i = 0,
\end{equation*}
with terminal condition $u_i(T) = 1$. This linear equation admits the unique solution 
\begin{equation*}
u_i(t) = e^{\gamma_i \int_t^T a_i(s)ds} + \int_t^T \gamma_i b_i(s) e^{- \gamma_i \int_t^s a_i(r)dr}ds.
\end{equation*}
Note that $b_i$ is positive everywhere, and thus so is $u_i$. Hence, $u_i^{1/\gamma_i}$ is well defined, and the unique solution to (\ref{eqf}) is given by
\begin{equation}\label{solBern}
f_i(t) = \left(e^{\gamma_i \int_t^T a_i(s)ds} + \int_t^T \gamma_i b_i(s) e^{- \gamma_i \int_t^s a_i(r)dr}ds \right)^{1/\gamma_i}.
\end{equation}
Substituting this solution \eqref{solBern} into the ansatz \eqref{ansatz-Not} yields the solution $v(x,y,t)$ of the HJB equation, as long as we check that $v_{xx} < 0$ and $v_x > 0$. But this is straightforward:
\begin{align*}
v_x(x,y,t) &= \epsilon_i (1 - \theta_i/n) x^{- 1 / \gamma_i} y^{- \theta_i( 1 - 1/\delta_i)} f_i(t) > 0, \\ 
v_{xx}(x,y,t) &= - \frac{\epsilon_i}{\gamma_i} (1 - \theta_i/n) x^{- 1 / \gamma_i - 1} y^{- \theta_i( 1 - 1/\delta_i)} f_i(t) < 0,
\end{align*}
where we again use $\gamma_i > 0$.
Therefore, in terms of $f_i$, we may express the optimal controls from \eqref{optcontrolNot} and (\ref{optcontrolNot}') as
\begin{equation}\label{optcontrolNotOne}
\begin{split}
\pi^{i,*} &=  \frac{\gamma_i(\mu_i - \sigma_i \widehat{\sigma \pi}_{-i}\theta_i (1 - 1 / \delta_i))}{\Sigma_i}, \\
c^{i,*}_t &= \epsilon_i^{-\gamma_i} (\bar{c}_{-i}(t))^{- \gamma_i \theta_i(1- 1 / \delta_i)} f_i^{- \gamma_i}(t).
\end{split}
\end{equation}
{ \ } \\

\noindent\textbf{The case} $\bm{\delta_i=1}$: 
In the case $\delta_i = 1$ we must proceed differently, but we will ultimately derive optimal controls that are consistent with the formulas in \eqref{optcontrolNotOne}. Note first that we may greatly simplify the form of \eqref{opt}, because the logarithmic utility function implies in particular that the other players no longer influence player $i$'s optimization. That is, player $i$ maximizes the simplified objective
\begin{equation}\label{valuefOne}
(1 - \theta_i /n) \E \left[ \int_0^T  \log (c^i_t X^i_t) dt + \epsilon_i  \log X^i_T \right].
\end{equation}
Noting that $1-\theta_i/n > 0$, the value (\ref{valuefOne}) is equal to $(1-\theta_i/n)\val2(X_0^i,0)$, where $\val2(x,t)$ solves the HJB equation
\begin{equation}
\begin{split}
0 = \val2_t &+ \max_{\pi \in \R} \left[ \pi \mu_i x \val2_x + \pi^2 \frac{1}{2} \Sigma_i x^2 \val2_{xx}\right] + \max_{c \in \R_+} \left[- c x  \val2_x + \log(cx) \right],
\end{split} \label{HJB-One-simplified}
\end{equation}
for $(x,t) \in \R_+ \times [0,T)$, with terminal condition $\val2(x,T) = \epsilon_i \log x$.
The maximum is attained by
\begin{equation}\label{optconsone}
\begin{split}
    \pi^{i,*}(x,t) &= - \frac{\mu_i x \val2_x(x,t)}{\Sigma_i x^2 \val2_{xx}(x,t)}, \quad\qquad
    c^{i,*}(x,t) = \frac{1}{x \val2_x(x,t)}.
\end{split}
\end{equation}
The HJB equation \eqref{HJB-One-simplified} then becomes 
\begin{equation}\label{HJBone}
0 = \val2_t - \frac{1}{2}\frac{(\mu_i x \val2_x)^2}{\Sigma_i x^2 \val2_{xx}} - 1 - \log  \val2_x.
\end{equation}
Make the ansatz
\[
\val2(x,t) = f_i(t)\epsilon_i\log x + g_i(t),
\]
where $f_i$ and $g_i$ are to be determined and satisfy $f_i(T)=1$ and $g_i(T)=0$.
Plug this into (\ref{HJBone}), defining the constant $\widehat\rho_i = \mu_i^2\epsilon_i / 2\Sigma_i$, to get
\[
\left( \epsilon_i f_i'(t) + 1 \right) \log x + g_i'(t) + \widehat\rho_i f_i(t) - 1 - \log \epsilon_i - \log f_i(t) = 0.
\]
Since there is only one term depending on $x$, we must have
\[
\epsilon_i f_i'(t) + 1 = 0, \quad\quad f_i(T)=1.
\]
This yields $f_i(t) = \epsilon_i^{-1}(T - t) +1$.
Then $g_i$ must solve
\[
g'_i(t)  =  - \widehat\rho_i (\epsilon_i^{-1}(T - t) +1) + 1 + \log ((T - t) + \epsilon_i),
\]
which is easily integrated using $g_i(T)=0$ to get the solution $g_i$, though we will not need to use it explicitly.
Note also that $f_i > 0$, and thus $w_x > 0$ and $w_{xx} < 0$ everywhere.
We have therefore solved the HJB when $\delta_i = 1$. Recalling (\ref{optconsone}), we deduce that the optimal controls are
\begin{align}\label{optcontrolOne}
\pi^{i,*} &= \frac{\mu_i}{\Sigma_i}, \quad\qquad c^{i,*}_t = \frac{1}{T - t + \epsilon_i}.
\end{align}
Note that the result \eqref{optcontrolNotOne} obtained above specializes to \eqref{optcontrolOne} when $\delta_i=1$. Indeed, if $\delta_i = 1$, then $\rho_i = 0$ and $\gamma_i = 1$, and the functions $a_i$ and $b_i$ defined in \eqref{def:a-b-for-f} reduce to $a_i \equiv 0$ and $b_i \equiv 1/\epsilon_i$. Thus \eqref{solBern} becomes $f_i(t) = \epsilon_i^{-1}(T- t)+1$, and \eqref{optcontrolNotOne} becomes \eqref{optcontrolOne}.

{ \ }

\noindent\textbf{Completing the proof:}
We now complete the proof, using the form we found above for the optimal control of player $i$ in response to the other players' choices. Namely, the best response of player $i$ is given by the controls in \eqref{optcontrolNotOne}, where $f_i$ is defined as in \eqref{solBern}, and we have seen that these formulas are valid for both cases $\delta_i=1$ and $\delta_i \neq 1$.

Note that if we assume that the other consumption functions are positive and continuous, then the optimal feedback consumption that we have found is also positive and continuous on $[0,T]$. Now, to conclude the proof, note that the original choice of $(\pi_i, c_i)_{i=1}^n$ is a strong equilibrium if and only if for each $i = 1, ..., n$, we have
\[
\pi^{i,*} = \pi_i \hspace{0,5cm} \text{ and } \hspace{0,5cm} c^{i,*}_t = c_i(t) \hspace{0,5cm} \forall t \in [0,T],
\]
where $(\pi^{i,*},c^{i,*})$ were given in \eqref{optcontrolNotOne}.

We first address the investment policy. Note that we obtained exactly the same optimal control $\pi^{i,*}$ as in the problem without consumption, and we can conclude as in the proof of \cite[Theorem 14]{lacker2017mean} that $\pi^{i,*} = \pi_i$ for all $i = 1, ..., n$ if and only if $\pi^{i,*}$ is as in \eqref{optcontroleq}. In addition, we prove the identity \eqref{def:vol-identity} just as in \cite[Theorem 14]{lacker2017mean}. Recall that $\rho_i$ defined in (\ref{rhoi}) depends on the investment policies $\pi_k$ of the other agents but not on the consumption policies; in particular, in equilibrium we have
\begin{equation}
\begin{split}
  \rho_i := \Big(1 - \frac{1}{\delta_i}\Big) &\Bigg\{ \frac{(1 - \theta_i/n)(\mu_i - \sigma_i\theta_i(1 - 1/\delta_i)\frac{1}{n}\sum_{k \ne i} \sigma_k \pi^{k,*})^2}{2 (\sigma_i^2 + \nu_i^2)(1 - (1 -\theta_i/n)(1 - 1/\delta_i))} \\
  &\quad+ \frac{1}{2}\theta_i^2\Big(1 - \frac{1}{\delta_i}\Big)\Bigg(\Big(\frac{1}{n} \sum_{k\ne i} \sigma_k \pi^{k,*} \Big)^2 + \frac{1}{n} \sum_{k \ne i} (\nu_k \pi^{k,*})^2\Bigg)  \\
  &\quad -   \theta_i \frac{1}{n} \sum_{k \ne i} \mu_k \pi^{k,*} + \frac{\theta_i}{2n} \sum_{k \ne i} ( \sigma^2_k + \nu_k^2)(\pi^{k,*})^2 \Bigg\}
  \end{split} \label{pf:rhoi-def}
\end{equation}

Next, we address the consumption policies.
In light of our arguments above, 
in order to have an equilibrium, we must simultaneously solve the following system of equations, for $i=1,\ldots,n$:
\begin{align}
c_i(t) &=  \epsilon_i^{-\gamma_i} (\bar{c}_{-i}(t))^{- \gamma_i \theta_i(1- 1 / \delta_i)} (f_i(t))^{- \gamma_i} \label{systeq1} \\
0 &= f_i'(t) + (\rho_i +  \theta_i (1 - 1 / \delta_i)\widehat{c}_{-i}(t)) f_i(t) + \frac{\epsilon_i^{-\gamma_i}}{\gamma_i} \bar{c}_{-i}(t)^{- \gamma_i \theta_i(1- 1 / \delta_i)} f_i(t)^{1 - \gamma_i}, \label{systeq2}
\end{align}
with $f_i(T) = 1$. Indeed, the first equation gives the best response of agent $i$ in terms of the other agents' strategies (computed in \eqref{optcontrolNotOne}) and the function $f_i$. The second equation is exactly the differential equation which determined $f_i$, which we solved explicitly in terms of the other agents' strategies in \eqref{solBern}. However, now that we have verified the validity of the ansatz for $v_i(x,y,t)$, to resolve the equilibrium it is more convenient to abandon the explicit form for $f_i$ and instead solve the equations \eqref{systeq1} and \eqref{systeq2} simultaneously.
To do this, first plug  \eqref{systeq1} into the last term of \eqref{systeq2} to find
\[
f_i'(t) + \left(\rho_i +  \theta_i \left(1 - \frac{1}{\delta_i}\right)\widehat{c}_{-i}(t) + \frac{1}{\gamma_i} c_i(t)\right) f_i(t) = 0.
\]
Defining the full average $\widehat{c}(t) = \frac{1}{n} \sum_{k = 1}^n c_k(t)$, note that $\widehat{c}_{-i}(t) = \widehat{c}(t) - c_i(t)/n$. Recalling the definition of $\gamma_i$ in \eqref{gammai}, 
we deduce that
\[
f_i'(t) + \left(\rho_i +  \theta_i \left(1 - \frac{1}{\delta_i}\right)\widehat{c}(t) + \frac{1}{\delta_i} c_i(t)\right) f_i(t) = 0.
\]
Hence, with $f_i(T) = 1$ this leads to
\begin{align}
f_i(t) = \exp \left( \int_t^T \left(\rho_i +  \theta_i \Big(1 - \frac{1}{\delta_i}\Big)\widehat{c}(s) + \frac{1}{\delta_i} c_i(s)\right) ds \right). \label{pf:f-expression}
\end{align}
Now notice that \eqref{systeq1} is equivalent to 
\begin{equation*}
c_i(t)^{1- \gamma_i(\theta_i/n)(1 - 1/\delta_i)} = \epsilon_i^{-\gamma_i} \bar{c}(t)^{- \gamma_i \theta_i(1 - 1/\delta_i)} f_i(t)^{-\gamma_i},
\end{equation*}
where $\bar{c}(t)$ denotes the full geometric average, $\bar{c}(t) := \left(\prod_{k = 1}^n c_k(t) \right)^{1/n}$. Hence, recalling the definition of $\gamma_i$ in \eqref{gammai},
\begin{equation*}
c_i(t) = \left(\epsilon_i f_i(t)\right)^{- \frac{\gamma_i}{1 - \gamma_i(\theta_i/n)(1 - 1/\delta_i)}}\bar{c}(t)^{- \frac{ \gamma_i \theta_i(1 - 1/\delta_i)}{1 - \gamma_i(\theta_i/n)(1 - 1/\delta_i)}}  = \left(\epsilon_i f_i(t)\right)^{-\delta_i}\bar{c}(t)^{- \theta_i (\delta_i - 1)} .
\end{equation*}
Next, plug in the expression for $f_i$ from \eqref{pf:f-expression} to get
\[
c_i(t) = \epsilon_i^{-\delta_i} \bar{c}(t)^{- \theta_i (\delta_i - 1)} \exp \left( - \delta_i \int_t^T \left(\rho_i +  \theta_i \Big(1 - \frac{1}{\delta_i}\Big)\widehat{c}(s) + \frac{1}{\delta_i} c_i(s)\right) ds \right),
\]
which is equivalent to
\begin{equation}\label{ci0}
c_i(t) \exp \left( \int_t^T  c_i(s) ds \right) = \epsilon_i^{-\delta_i} \bar{c}(t)^{- \theta_i (\delta_i - 1)} e^{- \delta_i \rho_i (T-t)} \exp \left(- \theta_i (\delta_i - 1) \int_t^T \widehat{c}(s) ds \right).
\end{equation}
Taking the geometric mean over $i=1,\ldots,n$, we get
\[
\bar{c}(t) \exp \left( \int_t^T  \widehat{c}(s) ds \right) = \left(\overline{\epsilon^{\delta}}\right)^{-1} \bar{c}(t)^{- \widehat{\theta(\delta -1)}}  e^{- \widehat{\delta \rho} (T-t)}\exp \left(- \widehat{\theta (\delta - 1)}\int_t^T \widehat{c}(s) ds \right),
\]
where we defined
\[
\begin{array}{lll}
\widehat{\theta (\delta - 1)} := \frac{1}{n} \sum_{k = 1}^n \theta_k (\delta_k -1),  &  \widehat{\delta \rho} := \frac{1}{n} \sum_{k = 1}^n \delta_k \rho_k, \text{ and} & \overline{\epsilon^{\delta}} := \left(\prod_{k = 1}^n \epsilon_k^{\delta_k} \right)^{1/n}. \\
\end{array}
\]
Thus,
\[
\bar{c}(t) \exp \left( \int_t^T  \widehat{c}(s) ds \right) = \left(\overline{\epsilon^{\delta}}\right)^{-\frac{1}{1 +\widehat{\theta(\delta-1)}}} e^{- \frac{\widehat{\delta \rho}}{1 + \widehat{\theta(\delta -1)}  } (T - t)}.
\]
Plugging this expression into \eqref{ci0}, we get
\begin{equation}\label{derivci}
c_i(t) \exp \left( \int_t^T  c_i(s) ds \right) =  \newC_i e^{ \newomega_i (T - t)},
\end{equation}
where we define
\begin{align*}
\newomega_i &:= \theta_i (\delta_i - 1)\frac{\widehat{\delta \rho}}{1 + \widehat{\theta(\delta -1)}  }- \delta_i \rho_i, \\
\newC_i &:= \epsilon_i^{- \delta_i} \left(\overline{\epsilon^{\delta}}\right)^{\frac{\theta_i(\delta_i -1 )}{1 +\widehat{\theta(\delta-1)}}} > 0.
\end{align*}
Integrate (\ref{derivci}) from $t$ to $T$ and take the logarithm to get
\begin{equation}
\int_t^T  c_i(s) ds = \begin{cases}
\log \left( 1 + \frac{\newC_i}{ \newomega_i} \left( e^{ \newomega_i (T - t)} - 1 \right) \right) &\text{if } \beta_i\neq 0 \\
 \log\left(\newC_i(T-t) + 1\right) &\text{if }  \beta_i = 0.
\end{cases}	\label{intci}
\end{equation}
This is indeed well defined because, when $\beta_i \neq 0$, the function $t \mapsto 1 + \frac{\newC_i}{ \newomega_i} \left( e^{ \newomega_i (T - t)} - 1 \right)$ is decreasing on $[0,T]$ and equal to $1$ at $t=T$. Differentiating both sides, we finally obtain
\[
c_i(t) = \begin{cases}
\left(\frac{1}{\newomega_i} + \left(\frac{1}{\newC_i} - \frac{1}{\newomega_i} \right)e^{- \newomega_i (T - t)} \right)^{-1} &\text{if } \beta_i\neq 0 \\
(T - t + \newC_i^{-1})^{-1} &\text{if } \beta_i = 0.
\end{cases}
\]
In summary, we have found the unique solution of the system of equations given in \eqref{systeq1} and \eqref{systeq2}, justifying our ansatz for the HJB equation \eqref{pf:HJB}. With a classical solution of the HJB equation in hand, by a standard verification argument \cite{fleming2006controlled,pham2009continuous} we conclude that the portfolio and consumption policies identified above do indeed provide the unique best responses and thus the unique strong equilibrium.
\end{proof}

\section{The mean field game} \label{se:mfg}

We study in this section the limit as $n \rightarrow \infty$ of the $n$-player game analyzed previously, and we explain how the limit can be viewed as the equilibrium outcome of a (mean field) game with a continuum of agents. For each agent $i$, define the \emph{type vector}
\[
\zeta_i := (x_0^i, \delta_i, \theta_i, \epsilon_i, \mu_i, \nu_i, \sigma_i).
\]
We now allow these parameters to depend also on $n$, though we will not burden the notation with an additional index.
These type vectors induce an empirical measure, the {\it type distribution}, which is the probability measure on the {\it type space}
\[
\ZZ := (0, \infty) \times (0, \infty) \times [0,1] \times (0, \infty) \times (0, \infty) \times [0, \infty) \times [0, \infty),
\]
given by $m_n = \frac{1}{n}\sum_{k=1}^n\delta_{\zeta_k}$.
Now note that for each agent $i$, the equilibrium strategy for the consumption as well as the investment only depends on the agent's own type vector and on the distribution $m_n$ of the type vectors. Hence, if we assume $m_n$ converges weakly to some limiting probability measure, then we expect the equilibrium outcome to converge in a certain sense.

In order to pass to the limit, let us now denote by $(x_0,\delta,\theta,\epsilon,\mu,\nu,\sigma)$ a $\ZZ$-valued random variable, with $\nu+\sigma > 0$ a.s.
This law of this random type vector represents the distribution of type vectors of a continuum of agents, and a single realization of this random type vector is to be interpreted as the type assigned to a single representative agent.
We assume that all expectations appearing in this paragraph are finite.
The $n\to\infty$ limiting forms of the constants $\phi$ and $\psi$ defined in Theorem \ref{thmN} are as follows:
\begin{align}
\phi &= \E\left[\frac{\delta \mu \sigma}{\sigma^2 + \nu^2} \right], \qquad \psi = \E\left[\frac{\theta (\delta -1) \sigma^2}{\sigma^2 + \nu^2} \right].  \label{MF-phi}
\end{align}
To identify limiting forms of the remaining quantities in Theorem \ref{thmN}, we additionally remove the $i$ subscript, letting the randomness of the type vector play the role of the names of the agents. This gives
\begin{align}
\newomega &= \theta(\delta -1) \frac{\E \left[\delta \rho \right]}{1 + \E \left[ \theta(\delta -1)\right]} - \delta \rho, \label{MF-omega}
\end{align}
and 
\begin{equation}
\begin{split}
   \rho = \left(1 - \frac{1}{\delta}\right) 
       &\left\{  \frac{\delta}{2(\sigma^2+\nu^2)} \left(\mu - \sigma \frac{\phi}{1 + \psi}\theta (1 - 1 / \delta)\right)^2 + \frac{1}{2}\left(\frac{\phi}{1 + \psi} \right)^2 \theta^2(1 - 1 / \delta)\right.\\
       &-  \theta\frac{\phi}{1 + \psi} \E\left[\frac{\delta \mu^2 - \theta (\delta -1)\sigma \mu}{\sigma^2 + \nu^2} \right] + \left. \frac{\theta}{2}\E \left[\frac{(\delta \mu - \theta (\delta -1)\sigma \frac{\phi}{1+\psi})^2}{\sigma^2 + \nu^2} \right]  \right\}.  \label{MF-rho}
\end{split}
\end{equation}
The limiting form of $\newC_i$ is given by
\begin{equation}
\newC = \epsilon^{-\delta} \left( e^{\E\left[\log(\epsilon^{-\delta})\right]}\right)^{-\frac{\theta(\delta-1)}{1 + \E[\theta(\delta-1)]}}. \label{MF-C}
\end{equation}
Indeed, this is determined by noting that
\[
\left(\prod_{k=1}^n\epsilon_k^{\delta_k}\right)^{1/n} = \exp\left(\frac{1}{n}\sum_{k=1}^n \log(\epsilon_k^{\delta_k})\right).
\]
The equilibrium investment policy of Theorem \ref{thmN} of the representative agent then becomes
\begin{align}
\pi^* = \frac{\delta \mu}{\sigma^2 + \nu^2} - \frac{\theta(\delta -1) \sigma}{\sigma^2 + \nu^2}\frac{\phi}{1 + \psi} \label{MF-pi*}
\end{align}
and the consumption policy becomes 
\begin{align} \label{MF-c*}
c^*_t = \begin{cases}
    \left(\frac{1}{\newomega} + \left(\frac{1}{\newC} - \frac{1}{\newomega} \right)e^{- \newomega (T - t)} \right)^{-1} & \text{ if }  \beta \neq 0 \\
    (T - t + \newC^{-1})^{-1} & \text{ if } \beta=0.
\end{cases}
\end{align}

We next illustrate how this strategy arises as the equilibrium of a mean field game. 
Let $(\Omega,\F,\FF=(\F_t)_{t \in [0,T]},\PP)$ be a filtered probability space supporting independent Brownian motions $B$ and $W$ as well as a random type vector $\zeta= (\xi, \delta, \theta, \epsilon, \mu, \nu, \sigma)$ as above. Assume that $\FF$ is the minimal complete filtration with respect to which $\zeta$ is $\F_0$-measurable and $W$ and $B$ are $\FF$-Brownian motions.
The representative agent's wealth process is determined by
\begin{equation}\label{wealth}
dX_t = \pi_t X_t (\mu dt + \nu dW_t + \sigma dB_t) - c_t X_t dt.
\end{equation}
As before, admissible strategies are given by $\FF$-progressively measurable $\R \times \R_+$-valued processes $(\pi,c)$ satisfying $\E\int_0^T(\pi_t^2 + c_t^2)dt < \infty$, and every admissible strategy results in a strictly positive wealth process.

Because this is a mean field game with common noise $B$, the mean field equilibrium condition will involve \emph{conditional} means given $B$. Intuitively, because the interaction between the agents occurs through the (geometric) average over the whole population, we expect some kind of a law of large numbers and asymptotic independence between the agents as $n\to\infty$. Due to the presence of common noise, any asymptotic independence between the agents must be conditional on the common noise $B$, and we refer to \cite{carmona-delarue-book,carmona-delarue-lacker} for more thorough and precise treatments of mean field games with common noise. In other words, the population average wealth and consumption processes should be adapted to the complete filtration $\FF^B=(\F^B_t)_{t \in [0,T]}$ generated by the common noise $B$.
Now, suppose that the representative agent knows that the geometric mean wealth and consumption of the (continuum of) other agents are governed by some $\FF^B$-adapted processes $\overline{X}$ and $\overline{\Gamma}$, respectively.
Then, the objective of the representative agent is to maximize the expected payoff
\begin{equation}\label{optpb}
\sup_{(\pi, c) \in \mathcal{A}_{MF}} \E \left[\int_0^T U\left(c_t X_t (\overline{\Gamma}_t \overline{X}_t)^{-\theta}; \delta \right)dt + \epsilon U \left(X_T \overline{X}^{-\theta}_T; \delta \right)\right].
\end{equation}
In equilibrium, the optimal $(\pi^*,c^*)$ for this problem should lead to $\exp\E[\log X_t \, | \, \F^B_t] = \overline{X}_t$ and $\exp\E[\log c^*_t \, | \, \F^B_t] = \overline{\Gamma}_t$, where we note  that $\exp\E[\log(\cdot)]$ is the continuous analogue of geometric mean.
We formalize this discussion in the following definition:

\begin{definition} Let $(\pi^*, c^*)$ be admissible strategies, and consider the $\FF^{B}$-adapted processes $\overline{X}_t := \exp \E[\log X^*_t | \mathcal{F}_{t}^B]$ and $\overline{\Gamma}_t = \exp \E[ \log c^*_t | \mathcal{F}_t^B]$, where $(X_t^*)_{t \in [0,T]}$ is the wealth process in (\ref{wealth}) corresponding to the strategy $(\pi^*, c^*)$. We say that $(\pi^*, c^*)$ is a \emph{mean field equilibrium} if $(\pi^*, c^*)$ is optimal for the optimization problem (\ref{optpb}) corresponding to this choice of $\overline{X}$ and  $\overline{\Gamma}$. We call $(\pi^*, c^*)$ a \emph{strong equilibrium} if $\pi^*$ is constant (and thus $\F_0$-measurable) and if $(c^*_t)_{t \in [0,T]}$ is continuous and $\F_0$-measurable.
\end{definition}

Because $\F_0$ is the $\sigma$-field generated by the type vector, to say that a strategy is $\F_0$-measurable simply means that it depends on the type vector only, not on the Brownian motions or wealth process.
We may now state a theorem which explains the precise sense in which the $n\to\infty$ limiting strategies computed above can be viewed as the equilibrium outcome of a mean field game.
 
\begin{theorem}\label{thmMF}
Assume that a.s.\ $\delta > 0$, $\theta \in [0,1]$, $\epsilon > 0$, $\mu > 0$, $\sigma \ge 0$, $\nu \ge 0$, and $\sigma + \nu > 0$. Define $(\phi,\psi,\newomega,\rho,\newC)$ as in \eqref{MF-phi}--\eqref{MF-C}, and assume all of the expectations therein are finite. Then there is a unique strong equilibrium $(\pi^*,c^*)$, and it takes the form given by \eqref{MF-pi*} and \eqref{MF-c*}.
\end{theorem}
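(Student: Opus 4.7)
The plan is to follow closely the strategy used for Theorem \ref{thmN}, systematically replacing arithmetic averages $\frac{1}{n}\sum_{k=1}^n a_k$ by expectations $\E[a]$ over the type distribution, and geometric averages $(\prod_{k=1}^n a_k)^{1/n} = \exp(\frac{1}{n}\sum_{k=1}^n \log a_k)$ by their continuum analogue $\exp \E[\log a]$. A key simplification from the strong-equilibrium restriction is that, since $\pi^*$ and $c^*$ are $\F_0$-measurable, $\log c^*_t$ depends only on the type vector $\zeta$, which is independent of $B$; hence
\[ \overline{\Gamma}_t = \exp \E[\log c^*_t \mid \F^B_t] = \exp \E[\log c^*_t] \]
is a deterministic continuous function of $t$. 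The process $\overline{X}_t$ is genuinely $\FF^B$-adapted, and applying It\^o's formula to $\log X_t$ in \eqref{wealth} and then conditioning on $\F^B_t$ yields dynamics of the form $d\overline{X}_t/\overline{X}_t = (\eta - \overline{\Gamma}_t)\,dt + \E[\sigma \pi^*]\,dB_t$ for a constant drift coefficient $\eta$ and diffusion coefficient $\E[\sigma \pi^*]$ that depend only on the type distribution.

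Given $(\overline{X},\overline{\Gamma})$ as exogenous inputs, I would solve the representative agent's optimization \eqref{optpb} by first conditioning on $\zeta$ and writing an HJB equation on the state $(X,\overline{X})$, entirely analogous to \eqref{pf:HJB}--\eqref{pf:HJB-boundary}. The same separable ansatz \eqref{ansatz-Not} when $\delta \neq 1$, and the log-plus-time ansatz used in the finite-$n$ proof when $\delta = 1$, reduce the PDE to a Bernoulli ODE for the time coefficient $f(t)$, whose explicit solution \eqref{solBern} transfers verbatim and yields best-response formulas of the form \eqref{optcontrolNotOne}, with $\widehat{\sigma \pi}_{-i}$ replaced by $\E[\sigma \pi^*]$ and $\bar{c}_{-i}(t)$ replaced by $\overline{\Gamma}_t$. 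Because this best response depends only on $\zeta$ and $t$, the strong-equilibrium ansatz is self-consistent, and a standard verification argument, applied conditionally on $\zeta$, shows it is optimal.

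Finally, I would close the fixed point by imposing $\overline{X}_t = \exp \E[\log X^*_t \mid \F^B_t]$ and $\overline{\Gamma}_t = \exp \E[\log c^*_t]$. Applying $\exp \E[\log \cdot]$ to the mean-field analogue of \eqref{systeq1} produces an equation of the same shape as \eqref{ci0}, and the integration trick leading to \eqref{derivci}--\eqref{intci} then gives the representation \eqref{MF-c*} with constants $\newC$, $\newomega$, and $\rho$ as in \eqref{MF-omega}--\eqref{MF-C}. The investment piece is extracted exactly as in the $n$-player argument, with the consistency identity $\E[\sigma \pi^*] = \phi/(1+\psi)$ obtained by multiplying \eqref{MF-pi*} by $\sigma$ and taking expectations. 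The main obstacle is the measurability bookkeeping around the common noise: I need to verify that the candidate value function, which carries $\zeta$ as a parameter, solves the HJB in a sense compatible with the verification theorem, and to check that the stated integrability of $(\phi,\psi,\rho,\newomega,\newC)$ suffices to make every expectation well-defined. Uniqueness of the strong equilibrium then follows from uniqueness of the HJB solution conditional on $\zeta$ combined with uniqueness of the fixed-point equation for $\overline{\Gamma}$.
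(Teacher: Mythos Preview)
Your proposal is correct and follows exactly the approach the paper itself indicates: the proof is omitted there, with the remark that it parallels Theorem~\ref{thmN} and \cite[Theorem 3.6]{lacker2017mean} by identifying the dynamics of $\overline{X}_t=\exp\E[\log X_t\mid\F^B_t]$ and treating $(X,\overline{X})$ as the state process. One small slip to fix when you write out the details: the drift of $\overline{X}$ carries the \emph{arithmetic} mean $\E[c^*_t]$ (the analogue of $\widehat{c}_{-i}$), not the geometric mean $\overline{\Gamma}_t$; the latter enters only through the utility term $(\overline{\Gamma}_t\overline{X}_t)^{-\theta}$, so both averages appear in the HJB just as $\widehat{c}_{-i}$ and $\bar{c}_{-i}$ do in \eqref{pf:HJB}.
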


\begin{corollary} (Single Stock) \label{co:MFG-singlestock}
Assume that $(\mu, \nu, \sigma)$ is deterministic with $\nu = 0$ and $\mu, \sigma > 0$. Then $\newomega$ defined in \eqref{MF-omega} can be simplified to
\begin{equation*}
\newomega = \frac{\mu^2}{2\sigma^2}(1 - \delta)\left(1 - \frac{\theta}{\theta_{\mathrm{crit}}} \right) \left( \delta - \frac{\theta}{\theta_{\mathrm{crit}}}(\delta -1) \right),
\end{equation*}
where
\begin{equation*}
\theta_{\mathrm{crit}} := \frac{1 + \E[\theta(\delta-1)]}{\E[\delta]},
\end{equation*}
and the optimal investment simplifies to
\begin{equation*}
\pi^* = \left( \delta - \frac{\theta}{\theta_{\mathrm{crit}}}(\delta - 1) \right)\frac{\mu}{\sigma^2}.
\end{equation*}
\end{corollary}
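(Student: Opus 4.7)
The proof is a direct substitution of $\nu = 0$ and the deterministic $(\mu, \sigma)$ into the mean field formulas, followed by algebraic simplification. Step one: with $\nu = 0$ and $\mu, \sigma$ deterministic, \eqref{MF-phi} collapses to $\phi = (\mu/\sigma)\E[\delta]$ and $\psi = \E[\theta(\delta - 1)]$, so
\[ \frac{\phi}{1+\psi} = \frac{(\mu/\sigma)\E[\delta]}{1+\E[\theta(\delta-1)]} = \frac{\mu}{\sigma\theta_{\mathrm{crit}}} \]
by the definition of $\theta_{\mathrm{crit}}$. Substituting this into \eqref{MF-pi*} immediately yields the claimed formula for $\pi^*$.

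Step two: introduce $m := \mu/\sigma$, $u := \theta/\theta_{\mathrm{crit}}$, and $\alpha := \delta - (\delta - 1)u$, so that $\sigma^2 \pi^* = \mu\alpha$. Two identities will be central:
\begin{align*}
\alpha - 1 &= (\delta - 1)(1 - u), \\
\E[\alpha] &= \E[\delta] - \psi/\theta_{\mathrm{crit}} = 1/\theta_{\mathrm{crit}}.
\end{align*}
Using the rewriting $\mu - \sigma(\phi/(1+\psi))\theta(1 - 1/\delta) = \mu\alpha/\delta$ in the first summand of $\rho$ and applying $\E[\alpha] = 1/\theta_{\mathrm{crit}}$ to the $\E[\mu\pi^*]$-type term, direct expansion of the formula for $\rho$ reduces it to the compact form
\[ \rho = \Big(1 - \frac{1}{\delta}\Big)\bigg\{\frac{m^2}{2}\big[(1-u)\alpha - u\big] + \frac{\theta m^2}{2}\E[\alpha^2]\bigg\}. \]

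Step three: assemble $\newomega = \theta(\delta-1)\E[\delta\rho]/(1+\psi) - \delta\rho$. The decisive identity is
\[ (\delta - 1)(1 - u)\alpha = (\alpha - 1)\alpha = \alpha^2 - \alpha, \]
which follows from $\alpha - 1 = (\delta - 1)(1 - u)$. Taking expectations and using $\E[(\delta-1)u] = \psi/\theta_{\mathrm{crit}}$, this yields
\[ \E\big[(\delta-1)((1-u)\alpha - u)\big] = \E[\alpha^2] - (1+\psi)/\theta_{\mathrm{crit}}, \]
precisely the identity needed to make the $\E[\alpha^2]$ contributions in $\theta\E[\delta\rho]/(1+\psi)$ and in $\delta\rho$ cancel exactly. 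What remains simplifies to $-(m^2/2)(\delta - 1)(1 - u)\alpha$, which equals the claimed form $(\mu^2/(2\sigma^2))(1 - \delta)(1 - \theta/\theta_{\mathrm{crit}})\alpha$ for $\newomega$.

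The main obstacle is arranging the cancellation of the second-moment term $\E[\alpha^2]$: at first glance $\newomega$ appears to depend on $\E[\alpha^2]$ through the fourth summand of $\rho$, but the identity $(\delta-1)(1-u)\alpha = \alpha^2 - \alpha$ makes that dependence vanish in the assembly of $\newomega$, leaving only the representative agent's own $(\delta, \theta)$ and the aggregate constant $\theta_{\mathrm{crit}}$.
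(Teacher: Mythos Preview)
Your proof is correct. The paper omits the proof entirely, stating only that it ``closely parallels the proofs of Theorem~\ref{thmN} and \cite[Theorem 3.6]{lacker2017mean},'' i.e., the envisioned route is to re-run the HJB verification argument in the single-stock setting. You take a different and more economical path: you accept the general mean-field formulas \eqref{MF-phi}--\eqref{MF-pi*} from Theorem~\ref{thmMF} and specialize them algebraically to $\nu=0$ with deterministic $(\mu,\sigma)$.

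Your substitutions $m=\mu/\sigma$, $u=\theta/\theta_{\mathrm{crit}}$, $\alpha=\delta-(\delta-1)u$ are well chosen; the identities $\alpha-1=(\delta-1)(1-u)$ and $\E[\alpha]=1/\theta_{\mathrm{crit}}$ are exactly the right tools. Your reduction of $\rho$ (inside the braces) to $\tfrac{m^2}{2}[(1-u)\alpha-u]+\tfrac{\theta m^2}{2}\E[\alpha^2]$ checks out: the key step is $\alpha^2/\delta + (\delta-1)u^2/\delta = \delta - 2(\delta-1)u + (\delta-1)u^2$, after which the first two summands in \eqref{MF-rho} collapse with the $-\theta\E[\mu\pi^*]=-m^2u$ term. (Note that the third summand in the displayed \eqref{MF-rho} should be read as $-\theta\E[\mu\pi^*]$, as one sees from the $n$-player formula \eqref{pf:rhoi-def}; you interpret it correctly.) The cancellation of $\E[\alpha^2]$ in the assembly of $\newomega$ via $(\delta-1)(1-u)\alpha=\alpha^2-\alpha$ is the crux, and your derivation of it is clean.

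What your approach buys: it avoids rerunning any stochastic-control machinery and makes transparent why the final $\newomega$ depends on the population only through the scalar $\theta_{\mathrm{crit}}$---the second-moment dependence $\E[\alpha^2]$ is genuinely present in $\rho$ but cancels in $\newomega$. The paper's envisioned HJB rederivation would obscure this structural point.
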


We omit the proof, because it closely parallels the proofs of Theorem \ref{thmN} and \cite[Theorem 3.6]{lacker2017mean}. The main idea, as in the proof of Theorem \ref{thmN}, is to identify the dynamics of the process $\overline{X}_t=\exp\E[\log X_t \, | \, \F^B_t]$, when $X$ is subject to $\F_0$-measurable strategies $(\pi,c)$, with $\pi$ time-independent. The representative agent's optimization problem can then be cast as a (tractable) stochastic control problem over the two-dimensional state process $(X,\overline{X})$.

\section{Discussion of the equilibrium} \label{se:discussion}

We now discuss the interpretation of the equilibria computed in the previous sections and the nature of the dependence on the various model parameters.
First, notice that our result is consistent with Samuelson's \cite{samuelson1975lifetime}, 
in the sense that the investment strategy we obtain is the same as in the model without consumption, derived in the previous work \cite{lacker2017mean}.
More generally, the investment strategy $\pi^*$ does not depend on the relative importance that the agents give to terminal wealth versus consumption, quantified by $\epsilon$ in our model.
With this in mind, we refer to \cite{lacker2017mean} for the discussion on the investment strategy, and we focus the rest of the discussion here on the consumption strategy. 

We further limit the discussion of the equilibrium to the mean field case, for which the equilibrium consumption policy is given by \eqref{MF-c*}, as the equilibrium in the $n$-agent game has essentially the same structure but more complicated formulas. Moreover, we restrict our attention mostly to the single stock case of Corollary \ref{co:MFG-singlestock}, which is again more tractable but already quite rich.

From the expression for $c^*_t$, we can distinguish three regimes of consumption behavior. The optimal consumption is necessarily a monotone function of time $t$, and a quick computation shows that it is increasing when $\newomega < \newC$, decreasing when $\newomega > \newC$ and constant when $\newomega = \newC$.  Recalling the form of the wealth process $X$ in \eqref{wealth}, we see that the expected rate of return of wealth, $\frac{d}{dt}\E[\log X_t \, | \, \F_0]$, is also a monotone function of time, with the opposite monotonicity of the consumption policy. (Note that conditioning on $\F_0$ is equivalent to conditioning on the representative agent's type.)
See Figure \ref{fig:consumption-policies} for some typical consumption policies.

\begin{figure}[h]
    \centering
    \includegraphics[scale =0.4]{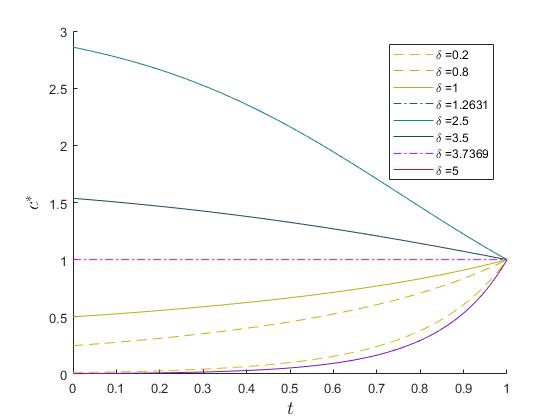}
    \caption{Equilibrium consumption $c^*_t$ versus $t$ for various values of $\delta$. The parameters are $\mu=5$, $\sigma=1$, $\epsilon=1$, $\E[\log(\epsilon^\delta)]=0$, $\E[\theta(\delta-1)] = 0.8$,  $\E[\delta]=3$, $\theta=0.8$, and $T=1$. Note that the final consumption $c^*_T=\newC$ does not depend on $\delta$. }
    \label{fig:consumption-policies}
\end{figure}

Recall that $\epsilon$ captures the relative importance that an agent gives to terminal wealth compared to consumption.
Note that $\newC \to \infty$ as $\epsilon \to 0$, and in particular we have $\newomega < \newC$ for small $\epsilon$. As discussed above, this means the agent aims for a decreasing rate of growth of wealth and an increasing rate of consumption. This is natural, because a small $\epsilon$ indicates the agent's lack of interest in terminal wealth, which drives $X(t)$ toward zero as $t \rightarrow T$ (as there is no bequest in our model). That is, for $\epsilon$ sufficiently small, 
the agent dis-invests after some time in order to consume more. In fact, the agent may even begin dis-investing immediately if $\pi^* < c^*(0)$.
On the contrary, if $\epsilon$ is large, the agent is more concerned with terminal wealth than consumption and will thus decrease her consumption over time. Indeed,  $\newC$ is decreasing in $\epsilon$, so for large $\epsilon$ we have $\newomega > \newC$.

We now turn to the key question of the impact of an agent's competitiveness and risk tolerance on her consumption behavior. To simplify the discussion, let us assume henceforth that no agent has a preference between her utility of wealth or utility of consumption; that is, $\epsilon = 1$ and $\E[\log(\epsilon^{-\delta})] = 0$, which in particular implies $\newC = 1$. 
Note that if $\theta=0$, then we recover the classical Merton solution without competition, with $\newomega= \frac{\mu^2}{2 \sigma^2}\delta(1 - \delta)$ and $\pi^* = \delta\mu/\sigma^2$.
For general $\theta$, we may still rewrite $\newomega$ and $\pi^*$ in an analogous manner as 
\begin{equation*}
\newomega = \frac{\mu^2}{2 \sigma^2}\delta_{\mathrm{eff}}(1 - \delta_{\mathrm{eff}}), \qquad\quad \pi^* = \delta_{\mathrm{eff}}\mu/\sigma^2,
\end{equation*}
where we define the \emph{effective risk tolerance parameter} 
\begin{align*}
\delta_{\mathrm{eff}} := \left(1 - \frac{\theta}{\theta_{\mathrm{crit}}} \right) \delta  + \frac{\theta}{\theta_{\mathrm{crit}}}.
\end{align*}
In other words, in the face of competition, the agent behaves like a Merton investor/consumer but with a different risk tolerance parameter.
We can interpret $\delta_{\mathrm{eff}}$ as a weighted average of the agent's own risk tolerance $\delta$ and the critical log-investor case $\delta_{\mathrm{log}} = 1$, with the weight determined by the agent's competitiveness. Take note, however, that the range of the weight $\theta/\theta_{\mathrm{crit}}$ is $[0,\infty)$ and $\delta_{\mathrm{eff}}$ can be negative, so we should avoid interpreting $\delta_{\mathrm{eff}}$ too literally as a risk tolerance parameter.

Let us investigate in more detail what distinguishes between agents who decrease versus increase their consumption over time, and let us continue to assume that $\epsilon \equiv 1$ (and thus $\newC \equiv 1$) for all agents. As discussed above, an agent increases her consumption over time if and only if $\newomega < \newC=1$. If $8\sigma^2 > \mu^2$, then we always have $\newomega < 1$, because $\delta_{\mathrm{eff}}(1-\delta_{\mathrm{eff}}) \le 1/4$ for any $\delta_{\mathrm{eff}}$. So assume instead that $8\sigma^2 < \mu^2$. Then, because $\newomega$ is a quadratic function of $\delta$ (if all other parameters are held fixed), we may find $\delta^*_{\pm}$ such that
\begin{equation*}
\newomega > 1 \ \ \Longleftrightarrow \ \  \delta \in (\delta_-^*, \delta_+^*).
\end{equation*}
That is, the agent decreases consumption over time if $\delta_-^* < \delta < \delta_+^*$, increases consumption over time if $\delta \notin (\delta_-^*, \delta_+^*)$, and consumes at a constant rate if $\delta \in \{\delta_-^*, \delta_+^*\}$.
Precisely, these two values are
\begin{equation*}
\delta_{\pm}^* := 1 + \frac{1}{2} \left( \frac{1}{\theta / \overline{\theta}_{\mathrm{crit}}- 1} \pm \frac{\sqrt{1 - 8 \sigma^2 / \mu^2}}{\left| \theta/ \overline{\theta}_{\mathrm{crit}} -1\right|} \right).
\end{equation*}
(If $\theta=\theta_{\mathrm{crit}}$, then $\newomega=0$, so let us assume $\theta \neq \theta_{\mathrm{crit}}$.) 
This explains the non-monotonicity in $\delta$ of the equilibrium consumption strategy, as well as the wave-like shape of the curve of $c^*_t$ versus $\delta$ and $\theta$ pictured in Figure \ref{fig:opt_c_vs_delta_theta}. 
In the classical Merton problem with no competition, recovered by setting $\theta=0$, the endpoints become $\delta_{\pm}^* = \tfrac12(1 \pm \sqrt{1 - 8 \sigma^2 / \mu^2})$, both of which are less than $1$; in this case only risk averse ($\delta < 1$) agents decrease their consumption over time. In contrast, in the competitive case $\theta > 0$, the interval $(\delta_-, \delta_+)$ may lie above or below $1$ depending on the sign of $1 - \theta /\overline{\theta}_{\mathrm{crit}}$.

\begin{figure}[h]
    \centering
    \includegraphics[scale =0.7]{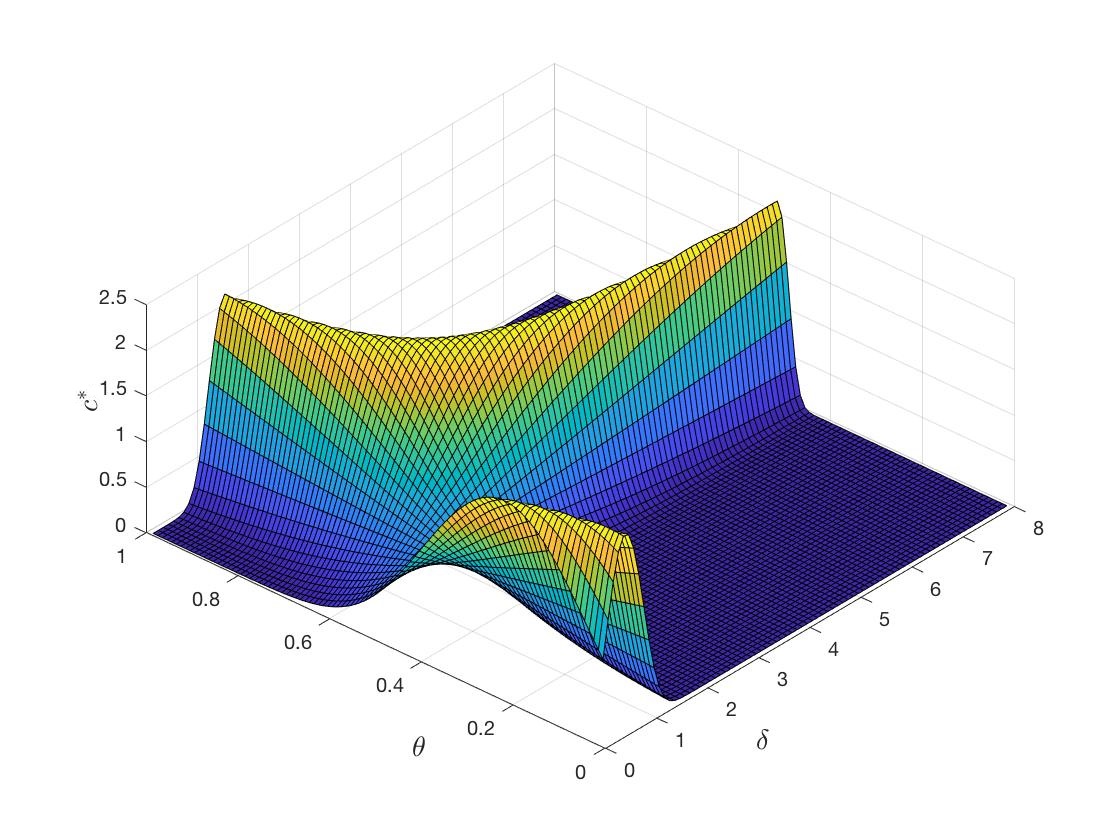}
    \caption{Optimal consumption versus $(\delta,\theta)$ at time $T/2$. The parameters are $\mu=5$, $\sigma=1$, $\epsilon=1$, $\E[\log\epsilon]=0$, $\E[\theta(\delta-1)]=1.6$, $\E[\delta]=5$, and $T=1$.}
    \label{fig:opt_c_vs_delta_theta}
\end{figure}

On the one hand, suppose the agent is less competitive than the critical value, or $\theta < \overline{\theta}_{\mathrm{crit}}$. Then $\delta_-^* < \delta_+^* < 1$, and we have seen that the agent will decrease her rate of consumption over time only if her risk tolerance lies within the range $(\delta_-^*, \delta_+^*)$. 
As we would expect, a relatively uncompetitive agent behaves similarly to a Merton investor in this respect.
On the other hand, if the agent is more competitive than the critical value, or $\theta > \overline{\theta}_{\mathrm{crit}}$, then $1 < \delta_-^* < \delta_+^*$. This means that even highly risk tolerant agents may decrease consumption over time.
Noting that $\delta_{\mathrm{eff}}$ decreases with $\theta$, one interpretation is as follows: Increasing $\theta$ exposes an agent to relative performance pressures, which is itself a source of risk, and to offset this additional risk the agent behaves like a Merton investor with a smaller risk tolerance parameter.

Note that a highly competitive agent, with $\theta > \theta_{\mathrm{crit}}$, behaves in a sense \emph{opposite} to how they would if $\theta=0$. Indeed, when $\theta > \overline{\theta}_{\mathrm{crit}}$, the effective risk tolerance $\delta_{\mathrm{eff}}$ is less than $1$ if $\delta > 1$  and greater than $1$ if $\delta < 1$. The agent effectively switches to the other side of the critical risk tolerance $\delta_{\mathrm{log}}=1$.

We have seen by now how, with other parameters held fixed, the consumption policy may depend non-monotonically on the risk tolerance $\delta$, with an intermediate range of risk tolerance parameters $(\delta_-^*,\delta_+^*)$ in which agents decrease consumption over time. Similarly, with other parameters held fixed, consumption can exhibit the same non-monotonicities as a function of $\theta$, with an intermediate range $(\theta_-^*,\theta_+^*)$ in which agents decrease consumption over time. See Figure \ref{fig:region} for a depiction of the range of $(\delta,\theta)$ parameters leading agents to decrease versus increase consumption over time.

\begin{figure}[h]
    \centering
    \includegraphics[scale =0.4]{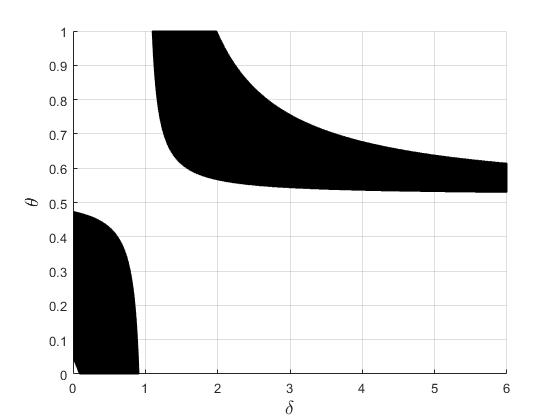}
    \caption{Consumption regime versus $(\delta,\theta)$. An agent with $(\delta,\theta)$ lying inside (resp.\ outside) the shaded region decreases (resp.\ increases) consumption rate over time. Agents on the boundary consume at a constant rate. Note there is a small unshaded wedge near the origin. The parameters are $\mu=5$, $\sigma = 1$, $\epsilon=1$, $\E[\log \epsilon]=0$, $\E[\theta(\delta-1)] = 1.6$, $\E[\delta]=5$. Here $\theta_{\mathrm{crit}}=0.52$.}
    \label{fig:region}
\end{figure}

\bibliographystyle{plain}
\bibliography{biblio}

\begin{thebibliography}{10}

\bibitem{abel1990asset}
A.B. Abel.
\newblock Asset prices under habit formation and catching up with the
  {J}oneses.
\newblock Technical report, National Bureau of Economic Research, 1990.

\bibitem{anthropelos-geng-zariphopoulou}
M.~Anthropelos, T.~Geng, and T.~Zariphopoulou.
\newblock Competitive investment strategies under forward performance criteria.
\newblock 2017.
\newblock In preparation.

\bibitem{basak2015competition}
S.~Basak and D.~Makarov.
\newblock Competition among portfolio managers and asset specialization.
\newblock Available at SSRN: https://ssrn.com/abstract=1563567, 2015.

\bibitem{bayraktar2018large}
E.~Bayraktar, J.~Cvitanic, and Y.~Zhang.
\newblock Large tournament games.
\newblock 2018.

\bibitem{bielagk2017equilibrium}
J.~Bielagk, A.~Lionnet, and G.~{dos Reis}.
\newblock Equilibrium pricing under relative performance concerns.
\newblock {\em SIAM Journal on Financial Mathematics}, 8(1):435--482, 2017.

\bibitem{carmona2018dyson}
R.~Carmona, M.~Cerenzia, and A.Z. Palmer.
\newblock The {D}yson game.
\newblock {\em arXiv preprint arXiv:1808.02464}, 2018.

\bibitem{carmona-delarue-book}
R.~Carmona and F.~Delarue.
\newblock {\em Probabilistic Theory of Mean Field Games with Applications
  {I}-{II}}.
\newblock Springer, 2018.

\bibitem{carmona-delarue-lacker}
R.~Carmona, F.~Delarue, and D.~Lacker.
\newblock Mean field games with common noise.
\newblock {\em The Annals of Probability}, 44(6):3740--3803, 2016.

\bibitem{chan2002catching}
Y.L. Chan and L.~Kogan.
\newblock Catching up with the {J}oneses: {H}eterogeneous preferences and the
  dynamics of asset prices.
\newblock {\em Journal of Political Economy}, 110(6):1255--1285, 2002.

\bibitem{demarzo2004diversification}
P.M. DeMarzo, R.~Kaniel, and I.~Kremer.
\newblock Diversification as a public good: {C}ommunity effects in portfolio
  choice.
\newblock {\em The Journal of Finance}, 59(4):1677--1716, 2004.

\bibitem{dosreis-zariphopoulou}
G.~{dos Reis} and T.~Zariphopoulou.
\newblock Forward utilities and mean-field games under relative performance
  concerns.
\newblock Work in progress, 2019.

\bibitem{espinosa2015optimal}
G.-E. Espinosa and N.~Touzi.
\newblock Optimal investment under relative performance concerns.
\newblock {\em Mathematical Finance}, 25(2):221--257, 2015.

\bibitem{fleming2006controlled}
W.~Fleming and H.M. Soner.
\newblock {\em Controlled {M}arkov processes and viscosity solutions},
  volume~25.
\newblock Springer Science \& Business Media, 2006.

\bibitem{frei2011financial}
C.~Frei and G.~{dos Reis}.
\newblock A financial market with interacting investors: does an equilibrium
  exist?
\newblock {\em Mathematics and financial economics}, 4(3):161--182, 2011.

\bibitem{gali1994keeping}
J.~Gal\'{i}.
\newblock Keeping up with the joneses: Consumption externalities, portfolio
  choice, and asset prices.
\newblock {\em Journal of Money, Credit and Banking}, 26(1):1--8, 1994.

\bibitem{gueant-lasry-lions}
O.~Gu{\'e}ant, J.-M. Lasry, and P.-L. Lions.
\newblock Mean field games and applications.
\newblock In {\em Paris-Princeton lectures on mathematical finance 2010}, pages
  205--266. Springer, 2011.

\bibitem{huang2006large}
M.~Huang, R.P. Malham{\'e}, and P.E. Caines.
\newblock Large population stochastic dynamic games: closed-loop
  {M}c{K}ean-{V}lasov systems and the {N}ash certainty equivalence principle.
\newblock {\em Communications in Information \& Systems}, 6(3):221--252, 2006.

\bibitem{karatzas1997explicit}
I.~Karatzas, J.P. Lehoczky, S.P. Sethi, and S.E. Shreve.
\newblock Explicit solution of a general consumption/investment problem.
\newblock In {\em Optimal Consumption and Investment with Bankruptcy}, pages
  21--56. Springer, 1997.

\bibitem{karatzas1987optimal}
I.~Karatzas, J.P. Lehoczky, and S.E. Shreve.
\newblock Optimal portfolio and consumption decisions for a “small
  investor” on a finite horizon.
\newblock {\em SIAM journal on control and optimization}, 25(6):1557--1586,
  1987.

\bibitem{karatzas2003optimal}
I.~Karatzas and G.~{\v{Z}}itkovi{\'c}.
\newblock Optimal consumption from investment and random endowment in
  incomplete semimartingale markets.
\newblock {\em The Annals of Probability}, 31(4):1821--1858, 2003.

\bibitem{lacker2017mean}
D.~Lacker and T.~Zariphopoulou.
\newblock Mean field and n-agent games for optimal investment under relative
  performance criteria.
\newblock {\em arXiv preprint arXiv:1703.07685}, 2017.

\bibitem{lasry-lions}
J.-M. Lasry and P.-L. Lions.
\newblock Mean field games.
\newblock {\em Japanese Journal of Mathematics}, 2(1):229--260, 2007.

\bibitem{merton1969lifetime}
R.C. Merton.
\newblock Lifetime portfolio selection under uncertainty: {T}he continuous-time
  case.
\newblock {\em The review of Economics and Statistics}, pages 247--257, 1969.

\bibitem{merton1975optimum}
R.C. Merton.
\newblock Optimum consumption and portfolio rules in a continuous-time model.
\newblock In {\em Stochastic Optimization Models in Finance}, pages 621--661.
  Elsevier, 1975.

\bibitem{pham2009continuous}
H.~Pham.
\newblock {\em Continuous-time stochastic control and optimization with
  financial applications}, volume~61.
\newblock Springer Science \& Business Media, 2009.

\bibitem{samuelson1975lifetime}
P.A. Samuelson.
\newblock Lifetime portfolio selection by dynamic stochastic programming.
\newblock In {\em Stochastic Optimization Models in Finance}, pages 517--524.
  Elsevier, 1975.

\bibitem{sethi1992infinite}
S.~Sethi and M.~Taksar.
\newblock Infinite-horizon investment consumption model with a nonterminal
  bankruptcy.
\newblock {\em Journal of optimization theory and applications},
  74(2):333--346, 1992.

\bibitem{sirri1998costly}
E.R. Sirri and P.~Tufano.
\newblock Costly search and mutual fund flows.
\newblock {\em The journal of finance}, 53(5):1589--1622, 1998.

\bibitem{sun2016systemic}
L.-H. Sun.
\newblock Systemic risk and interbank lending.
\newblock {\em arXiv preprint arXiv:1611.06672}, 2016.

\end{thebibliography}

\end{document}